\newtheorem*{theorem}{Theorem}
\newtheorem{lemma}{Lemma}
\title{Matching on the line admits\\
no $o(\sqrt{\log n})$-competitive algorithm}
\author{Enoch Peserico\thanks{Universit\`a degli Studi di Padova, Italy. Corresponding email: \texttt{scquizza@math.unipd.it}.} \and Michele Scquizzato\protect\footnotemark[1]  
}
\date{}
\begin{document}
\maketitle
\begin{abstract}
\noindent We present a simple proof that the competitive ratio of any randomized online matching algorithm for the line is at least $\sqrt{\log_2(n\!+\!1)}/12$
for all $n=2^i\!-\!1: i\in\mathbb{N}$.
\end{abstract}

\section{Online matching, on the line}
In \emph{online metric matching}~\cite{KalyanasundaramP93,KhullerMV94} $n$ points of a metric space are designated as \emph{servers}. One by one $n$ \emph{requests} arrive at arbitrary points of the space; upon arrival each must be matched to a yet unmatched server, at a cost equal to their distance. Matchings should minimize the ratio between the total cost and the \emph{offline} cost attainable if all requests were known beforehand. A matching algorithm is \emph{$c(n)$-competitive} if it keeps this ratio no higher than $c(n)$ for all possible placements of servers and requests. 

It is widely acknowledged~\cite{AntoniadisBNPS19,KoutsoupiasN03,vanStee16a} that the line is the most interesting metric space for the problem. Matching on the line models many scenarios, like a shop that must rent to customers skis of approximately their height, where a stream of requests must be serviced with minimally mismatched items from a known store. 
Despite matching being specifically studied  on the line
 since at least 1996~\cite{KalyanasundaramP98}, no tight competitiveness bounds are known.

As for upper bounds, the line is a doubling space and thus admits an $O(\log n)$-competitive randomized algorithm~\cite{GuptaL12}; 
a sequence of recent developments~\cite{AntoniadisBNPS19,NayyarR17,Raghvendra18} yielded the same ratio 
without randomization. Better bounds have been obtained only by algorithms with additional power, 
such as that to re-assign past requests~\cite{GuptaKS20,MegowN20} or predict future ones~\cite{AntoniadisCE0S20}.

As for lower bounds, the competitive ratio is at least $4.591$ for randomized algorithms and $9$ for deterministic ones since the \emph{cow-path} problem is a special case of matching on the line~\cite{KalyanasundaramP98}. These bounds were conjectured  tight~\cite{KalyanasundaramP98} until a complex adversarial strategy yielded a lower bound of $9.001$ for deterministic algorithms~\cite{FuchsHK05}. 
Beyond some $\Omega(\log n)$ lower bounds for restricted classes of algorithms~\cite{AntoniadisFT18, KoutsoupiasN03,NayyarR17}, no further progress has been made on the lower-bound side in the last two decades.

\section{An $\Omega(\sqrt{\log n})$-competitiveness bound}
We prove a simple $\Omega(\sqrt{\log n})$ lower bound on the competitive ratio of randomized online matching algorithms for the line.

For any $n=2^i-1$ with $i\in\mathbb N$ consider the $[0,n\!+\!1]$ interval; for each positive integer $j\leq n$ place a server at point $j$, and place $n$ requests over $\log_2(n\!+\!1)$ rounds as follows. For $r=1,\dots,\log_2(n\!+\!1)$, partition the interval into $(n\!+\!1)/2^r$ subintervals of length $2^r$ and choose for each request a point uniformly and independently at random in each subinterval; call that point the request's \emph{origin} and place the actual request on the closest integer multiple of $2^{-n}$, breaking ties arbitrarily (this ``discretization'' prevents technical difficulties, see Remark 1).

We prove in Lemma~\ref{lem:ub} that the expected distance between the $\ell^{th}$ leftmost server and the $\ell^{th}$ leftmost origin is $O(\sqrt{\log n})$, so servers and requests can be matched with an expected offline cost $O(n \sqrt{\log n})$. Conversely, we prove in Lemma~\ref{lem:lb} that this request distribution forces \emph{any} online matching algorithm ALG to incur an expected $\Omega(n)$ cost 
in any given round, for a total cost $\Omega(n\log n)$. This is true even if ALG can examine all requests in a round before servicing any. 
The two results can be combined to prove that for \emph{some} request sequence ALG incurs $\Omega(\sqrt{\log n})$ times the offline cost.

\begin{lemma}
\label{lem:ub}
The expected distance between the $\ell^{th}$ leftmost origin and the $\ell^{th}$ leftmost server is at most $\sqrt{\log_2 (n\!+\!1)}+3$.
\end{lemma}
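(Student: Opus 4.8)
The plan is to control, for every threshold $t\in[0,n+1]$, the random variable $N(t)$ counting the origins that fall in $[0,t]$, and to recover the position $X_{(\ell)}$ of the $\ell^{th}$ leftmost origin from it through the elementary equivalence $X_{(\ell)}>t \Longleftrightarrow N(t)\le \ell-1$. The structural observation is that $N(t)$ decomposes, over the $\log_2(n+1)$ rounds, into a deterministic part plus a sum of \emph{independent} Bernoulli contributions: in round $r$ every length-$2^r$ subinterval other than the one containing $t$ lies wholly inside or wholly outside $[0,t]$, so these contribute the deterministic count $\lfloor t/2^r\rfloor$ in total, while the single straddling subinterval contributes an indicator $B_r(t)\in\{0,1\}$ with $\mathbb E[B_r(t)]=\{t/2^r\}$. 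Hence $N(t)=\sum_{r=1}^{\log_2(n+1)}(\lfloor t/2^r\rfloor+B_r(t))$ with the $B_r(t)$ independent, so $\mathbb E[N(t)]=\sum_r t/2^r=t(1-\tfrac1{n+1})$, which is within $1$ of $t$ for all $t\le n+1$: the origins are ``balanced'' against the servers $1,\dots,n$.

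Write $L:=\log_2(n+1)$ and fix $\ell$; the relevant server sits at $\ell$, so we must bound $\mathbb E|X_{(\ell)}-\ell|$. For $s>0$ the event $\{X_{(\ell)}>\ell+s\}$ equals $\{N(\ell+s)\le \ell-1\}$, and since either $\ell+s>n+1$ (and the event is empty) or $\mathbb E[N(\ell+s)]\ge \ell+s-1$, this event forces $N(\ell+s)$ at least $s$ below its mean; symmetrically $\{X_{(\ell)}<\ell-s\}\subseteq\{N(\ell-s)\ge \ell\}$ forces $N(\ell-s)$ at least $s$ above its mean. Each $N(\cdot)$ being a sum of $L$ independent summands of range $1$, Hoeffding's inequality gives $\Pr[\,|X_{(\ell)}-\ell|>s\,]\le 2e^{-2s^2/L}$ for every $s>0$ (and trivially $\le 1$).

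Integrating, $\mathbb E|X_{(\ell)}-\ell|=\int_0^\infty \Pr[\,|X_{(\ell)}-\ell|>s\,]\,ds\le\int_0^\infty \min(1,\,2e^{-2s^2/L})\,ds$; cutting this integral at $\alpha:=\sqrt{L/2}$, bounding the integrand by $1$ on $[0,\alpha]$ and by the standard Gaussian-tail estimate $\int_\alpha^\infty 2e^{-2s^2/L}\,ds\le \frac{L}{2\alpha}e^{-2\alpha^2/L}$ on the rest, yields $\mathbb E|X_{(\ell)}-\ell|\le \alpha+\frac{L}{2\alpha}e^{-2\alpha^2/L}=\sqrt{L/2}\,(1+e^{-1})<\sqrt{L}\le\sqrt{L}+3$. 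The one genuinely delicate point is exactly this constant: integrating the sub-Gaussian tail on its own would only give $\sqrt{\pi/2}\,\sqrt L\approx 1.25\sqrt L$, so it is essential to fall back on the trivial bound $\Pr[\cdot]\le 1$ over a head of length $\Theta(\sqrt L)$ and only switch to Hoeffding beyond it; the additive $3$ in the statement is slack that also absorbs the boundary cases and the $\tfrac1{n+1}$ discrepancy in the mean of $N$.
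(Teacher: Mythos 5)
Your proof is correct, but it follows a genuinely different route from the paper's. Both arguments rest on the same structural fact—that the number of origins falling left of a threshold is, up to a deterministic shift, a sum of $\log_2(n+1)$ independent indicators (one ``straddling'' subinterval per round)—but they exploit it differently. The paper applies this only at the single threshold $t=\ell$: it bounds the \emph{variance} of the count $g_\ell$ by $\log_2(n+1)/4$, gets $E[\delta_\ell]\leq \sqrt{\log_2(n+1)}/2$ via Jensen, and then converts a rank deviation of $\delta$ into a positional deviation of at most $2\delta+3$ using the fact that round $1$ leaves an origin in every length-$2$ subinterval; this is entirely elementary (second moment plus Jensen, no exponential tail bounds) but pays the multiplicative $2$ and the additive $3$ in the conversion. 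You instead invoke the quantile--counting duality $X_{(\ell)}>t \Leftrightarrow N(t)\leq \ell-1$ at \emph{every} threshold $t=\ell\pm s$, apply Hoeffding to each, and integrate the resulting sub-Gaussian tail, truncating at $\sqrt{L/2}$ to keep the constant below $1$. This bypasses the rank-to-distance conversion entirely (so you never need the ``one origin per unit-density subinterval'' observation) and yields the strictly stronger bound $E|X_{(\ell)}-\ell|<\sqrt{\log_2(n+1)}$, at the cost of requiring a Chernoff-type inequality and a slightly more careful tail integration. Your handling of the boundary cases ($\ell+s>n+1$, the $\tfrac{1}{n+1}$ bias in $E[N(t)]$, and rounds where $t$ is a multiple of $2^r$) is sound, so the argument stands as a valid, and quantitatively sharper, alternative proof of the lemma.
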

\begin{proof}
Let $S_\ell$ be $\ell^{th}$ leftmost server and
$g_\ell$ be the number of origins to its left. Note that if $g_\ell$ equals respectively $\ell$ or $\ell-1$, the $\ell^{th}$ origin is the first immediately to the left, or to the right of $S_\ell$; and since the first round placed one origin in every subinterval of size $2$, such an origin is within distance $3$ of $S_\ell$. By the same token, denoting by $\delta_\ell$ the quantity $|g_\ell- (\ell-\frac{\ell}{n\!+\!1})|$, for any $\delta$ such that $\delta_\ell\leq \delta$ the $\ell^{th}$ leftmost origin is within distance $2\delta+3$ of $S_\ell$. Note that $\delta_\ell$ is the absolute deviation from the mean of $r_\ell$, since $r_\ell$ is the sum of $n$ independent indicator random variables each denoting whether a given origin was placed to the left of $S_\ell$, with total expectation $\frac{n}{n\!+\!1} \ell = \ell -\frac{\ell}{n\!+\!1}$ (by construction, the expected density of origins is constant throughout the main interval). At most one such variable in a given round has variance greater than $0$, albeit obviously at most $1/4$: that corresponding to the origin placed in a subinterval holding $S_\ell$ strictly in its interior. Adding the individual variances we obtain the variance of $r_\ell$, i.e.\ the expectation of $\delta_\ell^2$, is at most $\log_2 (n\!+\!1)/4$; and since by Jensen's inequality $E[\delta_\ell]\leq E[\delta_\ell^2]^{\frac{1}{2}}$, the expected distance between $S_\ell$ and the $\ell^{th}$ leftmost origin is at most $\sqrt{\log_2 (n\!+\!1)}+3$.
\end{proof}

\begin{lemma}
\label{lem:lb}
Any randomized online matching algorithm incurs an expected cost greater than $(n\!+\!1)/12$
in any given round, even knowing all requests in each round before servicing any. 
\end{lemma}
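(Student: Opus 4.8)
\section*{Proof proposal for Lemma~\ref{lem:lb}}

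The plan is to bound, for each round $r$ separately, the cost ALG incurs on that round from below by a quantity that does not depend on ALG's choices at all, and only then average over the round's random requests. Fix $r$ and write $m=(n\!+\!1)/2^r$ for the number of requests arriving in round $r$. Since every request must be served and servers are consumed, the number of free servers at the start of round $r$ is \emph{deterministically} $n-\sum_{s=1}^{r-1}(n\!+\!1)/2^{s}=2m-1$, whatever ALG did in the past and whatever its coin tosses were. Let $I_1,\dots,I_m$ be the round-$r$ subintervals of length $2^r$, let $S_k$ be the (possibly empty) set of free servers lying in $I_k$, and put $n_k=|S_k|$, so $\sum_{k=1}^m n_k=2m-1$.

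The key observation I would use is a per-interval lower bound that is oblivious to ALG: if the request of $I_k$, with origin $o_k$, is matched to a server $Y_k$, then its cost is at least $\operatorname{dist}(o_k,\,S_k\cup\partial I_k)$, where $\partial I_k$ is the pair of endpoints of $I_k$. Indeed, either $Y_k\in I_k$, and then $Y_k\in S_k$; or $Y_k\notin I_k$, and then the segment from $o_k$ to $Y_k$ passes through an endpoint of $I_k$. (The discretization displaces each request by at most $2^{-n}$, which is absorbed with enormous room to spare and dealt with as in Remark~1, so I work with the origins.) Summing over $k$, the round-$r$ cost of ALG is at least $\sum_{k=1}^m\operatorname{dist}(o_k,\,S_k\cup\partial I_k)$; this holds pointwise, for whatever matching ALG produces, and in particular even if ALG sees all $m$ requests before committing. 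Since the origins $o_k$ are independent and uniform in their subintervals, and independent of everything that determines the sets $S_k$, taking expectations yields $\mathbb E[\text{round-}r\text{ cost}]\ \ge\ \sum_{k=1}^m \mathbb E_{o\sim U(I_k)}\!\big[\operatorname{dist}(o,\,S_k\cup\partial I_k)\big]$.

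It then remains to bound each term and combine. The set $S_k\cup\partial I_k$ contains both endpoints of $I_k$ and has at most $n_k+2$ elements, hence partitions $I_k$ into at most $n_k+1$ ``gaps''; on a gap of length $g$ the distance-to-the-set function integrates to $g^2/4$, so by Cauchy--Schwarz $\int_{I_k}\operatorname{dist}(o,\,S_k\cup\partial I_k)\,do=\tfrac14\sum_j g_j^2\ \ge\ \tfrac{1}{4(n_k+1)}\big(\sum_j g_j\big)^2=\tfrac{2^{2r}}{4(n_k+1)}$, i.e.\ the $k$-th expectation is at least $\tfrac{2^r}{4(n_k+1)}$. Plugging this in and applying Cauchy--Schwarz once more together with $\sum_k n_k=2m-1$,
$$\mathbb E[\text{round-}r\text{ cost}]\ \ge\ \frac{2^r}{4}\sum_{k=1}^m\frac{1}{n_k+1}\ \ge\ \frac{2^r}{4}\cdot\frac{m^2}{\sum_{k=1}^m(n_k+1)}\ =\ \frac{2^r}{4}\cdot\frac{m^2}{3m-1}\ >\ \frac{2^r m}{12}\ =\ \frac{n\!+\!1}{12},$$
which is the claim.

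Essentially everything is in the second paragraph, and that is the step I expect to be the real crux. The tempting ``prefix-count'' estimate — that the number of matched server--request pairs straddling any point $t$ is at least the discrepancy between the cumulative counts of requests and of free servers to the left of $t$ — turns out to be far too weak here, because ALG can keep the cumulative free-server count glued to the cumulative request count almost everywhere. One must instead localize to a single subinterval and exploit that a point uniform in an interval of length $2^r$ is, in expectation, at distance $\Omega(2^r/n_k)$ from any configuration of $n_k$ servers inside that interval together with its two endpoints; the remaining ingredients (the deterministic server count $2m-1$ and two invocations of Cauchy--Schwarz) are routine.
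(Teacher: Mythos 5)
Your proof is correct and follows essentially the same route as the paper's: bound each request's cost from below by its distance to the nearest still-free server in its own subinterval or to the subinterval boundary, compute the expected distance via the induced gaps, and combine with the free-server count $2m-1$ using convexity. The only difference is cosmetic—you apply Cauchy--Schwarz per subinterval and then again across subintervals, while the paper applies it once to all segments of the round simultaneously.
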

\begin{proof}
Consider an origin placed uniformly at random in a subinterval of size $2^r$ during the $r^{th}$ round. Assume $m$ unmatched servers in the interior points of that subinterval divide it into $m+1$ segments of (integer) length $d_0,\dots,d_m$. Then the probability the corresponding \emph{request} falls within a segment of length $d$ is $d/2^r$, in which case the expected distance of the \emph{request} from the segment's closer endpoint is $d/4$. Adding over all the $s_r$ segments
in 
all the round's subintervals, and noting that $s_r$ does not exceed the  number of subintervals (i.e. $(n\!+\!1)/2^r$) plus the total number of unmatched servers (i.e. $(n\!+\!1)/2^{r-1}-1$), the expected cost to service 
all  requests in the round is at least:

\begin{equation*}
\sum_{h=1}^{s_r} \frac{d_h}{4} \cdot \frac{d_h}{2^r} \geq \frac{1}{4\cdot 2^r}s_r\left(\frac{n\!+\!1}{s_r}\right)^2
> \frac{(n\!+\!1)^2}{4\cdot 2^r} \cdot \frac{2^r}{3(n\!+\!1)} 
= \frac{n\!+\!1}{12}.
\end{equation*}

\end{proof}

We can then easily prove the following:

\begin{theorem}
The competitive ratio of any randomized online matching algorithm for the line is at least $\sqrt{\log_2(n\!+\!1)}/12$ for all $n=2^i-1: i\in\mathbb{N}$. 
\end{theorem}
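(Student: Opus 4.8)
The plan is to combine Lemmas~\ref{lem:ub} and~\ref{lem:lb} via linearity of expectation and the definition of the competitive ratio, evaluated on the random instance $\mathcal I$ of Section~2 (servers at the integers $1,\dots,n$, and $n$ requests placed over its $\log_2(n\!+\!1)$ rounds).

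First I would upper bound $\mathbb{E}[\mathrm{OPT}(\mathcal I)]$. Consider the offline matching that, for each $\ell$, joins the server at position $\ell$ with the request whose origin is the $\ell^{th}$ leftmost origin. As each request lies within $2^{-n}$ of its origin, this matching costs at most $\sum_{\ell=1}^{n}\bigl(|o_{(\ell)}-\ell|+2^{-n}\bigr)$, with $o_{(\ell)}$ the $\ell^{th}$ leftmost origin, so by Lemma~\ref{lem:ub} applied term by term, $\mathbb{E}[\mathrm{OPT}(\mathcal I)]\le n\sqrt{\log_2(n\!+\!1)}+3n+1$.

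Next I would lower bound $\mathbb{E}[\mathrm{ALG}(\mathcal I)]$ for an arbitrary randomized online algorithm ALG. Lemma~\ref{lem:lb} gives expected cost exceeding $(n\!+\!1)/12$ in each of the $\log_2(n\!+\!1)$ rounds; summing (legitimate by linearity, and consistent with the online model since Lemma~\ref{lem:lb} already lets ALG see an entire round before serving any of it) yields $\mathbb{E}[\mathrm{ALG}(\mathcal I)]>(n\!+\!1)\log_2(n\!+\!1)/12$. If ALG were $c$-competitive, its expected cost on every fixed instance would be at most $c$ times that instance's offline optimum, so averaging over the realizations of $\mathcal I$ would force $\mathbb{E}[\mathrm{ALG}(\mathcal I)]\le c\cdot\mathbb{E}[\mathrm{OPT}(\mathcal I)]$; hence
\[
c\ \ge\ \frac{\mathbb{E}[\mathrm{ALG}(\mathcal I)]}{\mathbb{E}[\mathrm{OPT}(\mathcal I)]}\ >\ \frac{(n\!+\!1)\log_2(n\!+\!1)/12}{n\sqrt{\log_2(n\!+\!1)}+3n+1}.
\]

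I expect the one genuinely delicate step to be making this last ratio reach $\sqrt{\log_2(n\!+\!1)}/12$ exactly. The $3n$ (inherited from the ``$+3$'' of Lemma~\ref{lem:ub}) and the discretization term are of lower order than the leading $n\sqrt{\log_2(n\!+\!1)}$, yet the displayed bound is a hair too weak for the stated constant. The fix is to estimate $\mathbb{E}[\mathrm{OPT}(\mathcal I)]$ directly as an expected discrepancy: the variance bound $\log_2(n\!+\!1)/4$ established in the proof of Lemma~\ref{lem:ub} for the number of origins left of a server in fact holds for the number left of \emph{any} cut $x\in[0,n\!+\!1]$, so by Jensen
\[
\mathbb{E}[\mathrm{OPT}(\mathcal I)]\ \le\ \int_0^{n+1}\mathbb{E}\bigl|\#\{\text{origins}<x\}-\lfloor x\rfloor\bigr|\,dx+1\ \le\ \tfrac12(n\!+\!1)\sqrt{\log_2(n\!+\!1)}+(n\!+\!2).
\]
Plugging this into the chain above makes $c$ exceed $\sqrt{\log_2(n\!+\!1)}/12$ as soon as $\log_2(n\!+\!1)$ is at least a small absolute constant, and for the finitely many smaller values the claimed bound is below $1\le c$, so nothing is to prove there. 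Everything else is routine bookkeeping.
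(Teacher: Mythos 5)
Your proposal is correct and follows the paper's overall strategy: the same random instance, Lemma~\ref{lem:ub} to bound the expected offline cost, Lemma~\ref{lem:lb} summed over the $\log_2(n\!+\!1)$ rounds to bound the expected online cost, and an averaging (Yao-style) step to pass from expectations over the instance distribution to a single bad sequence. Where you genuinely diverge is at the one step you flag as delicate, and your instinct there is right: applying Lemma~\ref{lem:ub} term by term gives $E[\mathrm{OPT}]\le n\sqrt{\log_2(n\!+\!1)}+3n+n2^{-n}$, and with that denominator the ratio $\frac{(n+1)\log_2(n+1)/12}{n\sqrt{\log_2(n+1)}+3n+n2^{-n}}$ falls strictly below $\sqrt{\log_2(n\!+\!1)}/12$ for all large $n$ (it equals $\frac{\sqrt{L}}{12}\cdot\frac{1+1/n}{1+(3+2^{-n})/\sqrt{L}}$ with $L=\log_2(n\!+\!1)$). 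The paper's displayed denominator carries a single $+3$ rather than $+3n$, which is what makes its final inequality go through under the hypothesis $\sqrt{L}>12$; as literally derived from Lemma~\ref{lem:ub}, the additive term should be $3n$. Your repair --- writing $\mathrm{OPT}$ as the discrepancy integral $\int_0^{n+1}|\#\{\text{origins}<x\}-\lfloor x\rfloor|\,dx$ (plus the $n2^{-n}$ discretization slack), observing that the per-round ``at most one nonconstant indicator'' variance argument applies verbatim to an arbitrary cut $x$, and integrating the resulting bound $\sqrt{L}/2+\{x\}+x/(n\!+\!1)$ --- yields $E[\mathrm{OPT}]\le\frac12(n\!+\!1)\sqrt{L}+n+2$, after which the target inequality reduces to $\sqrt{L}\ge 2\frac{n+2}{n+1}$, i.e.\ $L\ge 9$; the remaining small cases are absorbed by $c\ge 1$ since $\sqrt{L}/12<1$ there. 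This is a sharper and self-contained estimate of the offline cost that actually delivers the stated constant $1/12$, at the price of invoking (and it would be worth stating explicitly) the standard fact that the minimum-cost matching of two $n$-point multisets on the line is attained by the sorted matching and equals the integral of the absolute difference of their counting functions. Your formulation of the averaging step via $E[\mathrm{ALG}]\le c\,E[\mathrm{OPT}]$ is also slightly cleaner than the convex-combination form, as it sidesteps any concern about sequences with zero offline cost.
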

\begin{proof}
Let $C_{A}(\sigma)$ be the expected cost incurred by a randomized online matching algorithm ALG on a request sequence $\sigma$, and $C_{O}(\sigma)$ the offline cost; and let $p_\sigma$ be the probability of generating $\sigma$ through the origin-request process described earlier. 
Since $\forall a_i, b_i>0$ we have that $(\sum_i a_i)/(\sum_i b_i)$ is a convex linear combination of the individual ratios $a_i/b_i$, then if $\sqrt{\log_2(n\!+\!1)}>12$:
\begin{equation*}
\max_{\sigma:p_\sigma\neq 0} \frac{C_{A} (\sigma)}{C_{O} (\sigma)}
\geq \frac{\sum\limits_{\sigma:p_\sigma\neq 0} C_{A} (\sigma) p_\sigma}
{\sum\limits_{\sigma:p_\sigma\neq 0} C_{O} (\sigma) p_\sigma}
>\frac{(n\!+\!1)\log_2(n\!+\!1)/12}
{n\sqrt{\log_2 (n\!+\!1)}+3+n2^{-n}}
\geq\frac{\sqrt{\log_2 (n\!+\!1)}}{12}.
\end{equation*}
\end{proof}

\textbf{Remark 1:} Without discretization the term $\sum_{\sigma:p_\sigma\neq 0} C_{A} (\sigma) p_\sigma$ in the proof of the theorem  would become an integral potentially not well-defined for some pathologic algorithms,
e.g.\ those servicing requests for rational points in an interval with one server and for irrational points with another.

\textbf{Remark 2:} For any constant $\epsilon>0$ ALG incurs an $\Omega(n \log n)$ cost even over the last $\epsilon \log_2 n$ rounds; thus, the bound holds even if ALG has advance knowledge of all but the last $n^{\epsilon}$ requests.

\textbf{Remark 3:} Consider a space with a non-zero metric. For any set $X$ of $n$ points, let $d_X$ be its diameter and $c_X$ the length of its shortest circuit; and let $\mu = \sup_X c_X/d_X$ (e.g.\ $\mu=2$ on the line). No online matching algorithm can be $o(\mu)$-competitive, and there is an algorithm $O(\mu\log^2 n)$-competitive in \emph{all} spaces~\cite{NayyarR17}. It has been asked whether some similarly universal algorithm can have an $O(\mu)$-competitive ratio~\cite{NayyarR17}: our result provides a negative answer.

\section*{Acknowledgements}
\noindent  Michele Scquizzato is supported, in part, by Univ.\ Padova grant BIRD197859/19. Both authors thank Kirk Pruhs for his constructive criticism and his insightful observations, including that leading to Remark 3.


\end{document}